\numberwithin{equation}{section}
\providecommand{\algorithmname}{Algorithm}
\newtheorem{prop}{Proposition}[section]
\newcounter{hypA}
\newenvironment{hypA}{\refstepcounter{hypA}\begin{itemize}
  \item[({\bf A\arabic{hypA}})]}{\end{itemize}}
\newcounter{hypB}
\newcounter{hypD}
\newenvironment{hypD}{\refstepcounter{hypD}\begin{itemize}
 \item[({\bf D\arabic{hypD}})]}{\end{itemize}}
\date{}
\begin{document}

\begin{center}

{\Large \textbf{Bayesian Parameter Inference for Partially Observed Diffusions using Multilevel Stochastic Runge-Kutta Methods}}

\vspace{0.5cm}

PIERRE DEL MORAL$^{1}$, SHULAN HU$^{2}$, AJAY JASRA$^{3}$, HAMZA RUZAYQAT$^{3}$  \& XINYU WANG$^{4}$

{\footnotesize 
$^{1}$Institut de Mathematiques de Bordeaux, 33405 Bordeaux, FR\\
$^{2}$School of Mathematics and Statistics, Wenlan School of Business, \\
$^{3}$Applied Mathematics and Computational Science Program, \\ Computer, Electrical and Mathematical Sciences and Engineering Division, \\ King Abdullah University of Science and Technology, Thuwal, 23955-6900, KSA.\\
$^{4}$ Zhongnan University of Economics and Law, CN.
} \\
\footnotesize{E-Mail: \verb|pierre.del-moral@inria.fr, hu_shulan@zuel.edu.cn|,
\verb|ajay.jasra@kaust.edu.sa, hamza.ruzayqat@kaust.edu.sa, wang_xin_yu@zuel.edu.cn|}

\begin{abstract}
We consider the problem of Bayesian estimation of static parameters associated to a partially and discretely observed diffusion process. We assume that the exact transition dynamics of the diffusion process are unavailable, even up-to an unbiased estimator and that one must time-discretize the diffusion process. In such scenarios it has been shown how one can introduce the multilevel Monte Carlo method to reduce the cost to compute 
posterior expected values of the parameters
for a pre-specified mean square error (MSE); see \cite{jasra_bpe_sde}. These afore-mentioned methods rely on upon the Euler-Maruyama discretization scheme which is well-known in numerical analysis to have slow convergence properties. We adapt stochastic Runge-Kutta (SRK) methods for
Bayesian parameter estimation of static parameters for diffusions. This 
can be implemented in high-dimensions of the diffusion and seemingly under-appreciated in the uncertainty quantification and statistics fields. 
For a class of diffusions and SRK methods, we consider the estimation of the posterior expectation of the parameters.
We prove
that to achieve a MSE of $\mathcal{O}(\epsilon^2)$, for $\epsilon>0$ given, the associated work is $\mathcal{O}(\epsilon^{-2})$. 
Whilst the latter is achievable for the Milstein scheme, this method is often not applicable for diffusions in dimension larger than two. We also illustrate our methodology in several numerical examples.
\\
\noindent \textbf{Key words}: Bayesian Inference, Diffusions, Multilevel Monte Carlo, Runge-Kutta
\end{abstract}

\end{center}

\section{Introduction}

We are given a diffusion process on $[0,T]$, $T\in\mathbb{N}$, $\theta\in\Theta\subseteq\mathbb{R}^{d_{\theta}}$:
\begin{equation}\label{eq:diff}
dX_t = \mu_{\theta}(X_t)dt + \sigma_{\theta}(X_t)dW_t
\end{equation}
where $X_t\in\mathbb{R}^d$, $\mu:\mathbb{R}^d\times\Theta\rightarrow\mathbb{R}^d$, $\sigma:\mathbb{R}^d\times\Theta\rightarrow\mathbb{R}^{d\times d}$,
$X_0=x_0\in\mathbb{R}^d$ given and $\{W_t\}_{t\in[0,T]}$ is a standard $d-$dimensional Brownian motion. 
It is assumed that the drift and diffusion coefficients obey standard global Lipschitz and growth conditions (e.g.~\cite{kloeden}), uniform in $\theta$ to allow the existence and uniqueness of the solution $X_t$; we will make this more precise in the next section.
 We will have access to observations at unit times $Y_1,Y_2,\dots,Y_T$, $Y_k\in\mathsf{Y}$ and that conditional on $\{x_t\}_{t\in[0,T]}$, $\theta$ there is joint probabiilty density:
$$
p(y_{1:T}|\{x_t\}_{t\in[0,T]},\theta) = \prod_{k=1}^T g_{\theta}(y_k|x_k)
$$
where $g:\mathsf{Y}\times\mathbb{R}^d\times\Theta\rightarrow\mathbb{R}^+$ is a probability density function on $\mathsf{Y}$ and we have used the short hand $y_{1:T}=(y_1,\dots,y_T)$. Given a prior density, $\pi$, on the unknown parameters $\theta$, the objective is to consider computing expectations w.r.t.~the posterior density:
$$
\Pi\left(d(x_{1:T},\theta)\right) \propto \left\{\prod_{k=1}^T g_{\theta}(y_k|x_k)f_{\theta}(dx_k|x_{k-1})\right\}\pi(\theta)d\theta
$$
where $f_{\theta}$ is the transition kernel associated to \eqref{eq:diff} and $d\theta$ is the Lebesgue measure.
This problem has a wide number of applications in finance, econometrics and epidemiology; see \cite{cappe,golightly,ryder}.

In many cases of practical interest the transition density associated to $f_{\theta}$ is not available analytically and moreover, exact simulation from the process \eqref{eq:diff}, even on a discrete time grid may not be possible in general. As a result, much of the work that considers posterior inference, works directly with a time discretization of the diffusion process. Throughout the article, we consider a time-discretization on a regularly spaced grid of spacing 
$\Delta_l=2^{-l}$. Such a choice is consistent with much of the literature and is not restrictive in the sense that it can be modified with little change to the resulting methodology to be described. Once an explicit discretization scheme is chosen, one considers Bayesian inference related to a modified posterior $\Pi_l$ that we will specify later on. Perhaps the most popular form of discretization scheme that is used in uncertainty quantification (UQ) and statistics is the Euler-Maruyama (E-M) method \cite{chada_ub,golightly,jasra_bpe_sde,ub_levy, ryder}. This approach can be exactly simulated using Gaussian random variables and requiring only access to the drift and diffusion coefficients. This is then combined with a Markov chain Monte Carlo (MCMC) method and Bayesian inference is performed; see for instance the articles \cite{chada_ub,golightly,graham,jasra_bpe_sde,ub_levy,ryder}.

It is well-known that for a classes of diffusion processes, the multilevel Monte Carlo (MLMC) method of \cite{giles,hein} can be used to assist and enhance Bayesian parameter inference in the afore-mentioned problems; see \cite{jasra_bpe_sde,ml_rev}. The MLMC method consists of approximating a collapsing sum representation of
expectations w.r.t.~a sequence of posterior densities $\Pi_0,\dots,\Pi_L$. The idea is then to approximate the differences of expectations w.r.t.~$\Pi_l$ and $\Pi_{l-1}$, $l\in\{1,\dots,L\}$, by sampling an appropriate coupling of $(\Pi_l,\Pi_{l-1})$. If the coupling is suitably well-defined and an appropriate difference between $\Pi_L$ and $\Pi$ is sufficiently well understood, it is possible to choose the Monte Carlo samples approximating the differences between 
$\Pi_l$ and $\Pi_{l-1}$ in such a way as to improve upon simply considering $\Pi_L$. More precisely, under assumptions, in \cite{jasra_bpe_sde} using a type of importance sampling identity combined with MCMC, when using the E-M method
the cost to achieve a mean square error (MSE) of $\mathcal{O}(\epsilon^2)$, for $\epsilon>0$ given, is $\mathcal{O}(\epsilon^{-2}\log(\epsilon)^2)$. If one considered only an MCMC algorithm associated to $\Pi_L$ then this cost increases to $\mathcal{O}(\epsilon^{-3})$.

Fundamental to the MLMC method of \cite{jasra_bpe_sde} is the time-discretization that is used as well as the weak and strong error rates; see e.g.~\cite{kloeden}. By weak error we mean
$$
|\mathbb{E}\left[X_T-X_T^l\right]| 
$$
where $X_T^l$ is a solution associated of a time discretization method, with time step $\Delta_l$, $\mathbb{E}$ is an expectation operator on an appropriately defined probability space and $|\cdot|$ is the $\mathbb{L}_1-$norm. The strong error is taken as
$$
\mathbb{E}\left[\|X_T-X_T^l\|^2\right]
$$
where $\|\cdot\|$ is the $\mathbb{L}_2-$norm. It is well-known that for the E-M method, under assumptions, the weak error is $\mathcal{O}(\Delta_l)$ (i.e.~order one, referring to the exponent of $\Delta_l$) and the strong error rate is $\mathcal{O}(\Delta_l)$ also order 1; see e.g.~\cite{kloeden}. These two rates are important as the weak error is essentially used to determine the bias, i.e.~the difference between $\Pi_L$ and $\Pi$ and the strong error is used to measure the degree of (so-called synchronous) coupling between $\Pi_l$ and $\Pi_{l-1}$; although the arguments are more sophisticated than that, this is the essence of them. If the strong or weak error can increase in order then the associated computational cost can be reduced to achieve a given MSE; as we will use MCMC, the optimal such cost is $\mathcal{O}(\epsilon^{-2})$ to obtain an MSE of $\mathcal{O}(\epsilon^{2})$.

The popularity of the E-M method in UQ and statistics can be stemmed from the fact that it is easy to simulate (depending on increments of Brownian motion) and does not require any derivative information associated to the drift and diffusion coefficients. Such a method is then relatively fast, especially when needed repeatedly as would be the case when using MCMC.
None-the-less as has been frequently mentioned in the numerical analysis literature, the E-M method is very slow in terms of its convergence; see e.g.~\cite{bur_bur,kloeden,rossler,rum}. An alternative that is well-known and that has been used in the UQ and statistics literature (e.g.~\cite{jasra_anti}) along with multilevel methods is the Milstein approach. This technique corresponds with E-M if $\sigma_{\theta}$ in \eqref{eq:diff} is a constant, otherwise it often needs L\'evy areas when $d$ is at least two or more; this can be circumvented when using derivative information as in \cite{ml_anti}.  However, when applicable, this method can in principle achieve an optimal cost for a given MSE and is hence perhaps the gold standard that is used in UQ and statistics. This derives from the fact that the strong error is order two.

In this article we consider using the well-known class of stochastic Runge-Kutta (SRK) methods. These are a class of discretization methods that have been extensively studied in numerical analysis, since at least the 1970s and of which the E-M and Milstein schemes are special cases; we refer to some of the key papers in that field for a formal review \cite{bur_bur,rossler,rossler1,rum}. These approaches, as one might expect, are based upon the well-known Runge-Kutta method for the numerical solution of ordinary differential equations (ODEs). SRK methods, when they are explicit, are often of a similar form as the E-M or Milstein scheme and have a cost that (in terms of order) is linear in the number of discretization points. Moreover, they can sometimes only be based upon increments of Brownian motion.
The main point, however, is that the strong and sometimes even weak error \cite{tocino} is increased versus the Milstein scheme and is fully implementable even in dimensions bigger than two. For instance, if one only relies on Brownian increments, as we shall in this article, a well-known result in the literature is that the strong error of SRK methods has a strong error of at most three, which can be achievable; see e.g.~\cite{bur_bur,rum}.

In this article we develop MCMC methods for approximating a multilevel identity associated to $\Pi_L$, where the latter is defined via an SRK discretization of the process \eqref{eq:diff}. As noted in the abstract, it seems that these SRK discretization methods are largely under-appreciated in the UQ and statistics literature and one of the purposes of this article is to promote the usage of such methodology. The main contributions of this article then are as follows:
\begin{enumerate}
\item{An MCMC method for approximating a multilevel identity associated to $\Pi_L$, where the latter is defined via an SRK discretization of the process \eqref{eq:diff}.}
\item{For this method we prove that for a class of diffusions and under some additional assumptions that to achieve a MSE, associated to posterior expectations, of $\mathcal{O}(\epsilon^2)$, for $\epsilon>0$ given, the associated work is $\mathcal{O}(\epsilon^{-2})$.}
\item{We illustrate and compare our methodology in several numerical examples.}
\end{enumerate}
At this stage, it is worth emphasizing that the approach here, depends only on Brownian increments and has a cost to simulate that is of the same order as the E-M and Milstein schemes. It does require access to the first derivatives, in the $x$ variable of $\sigma_{\theta}(x)$ and this is due to a correction term. If one considers the Stratanovich form of \eqref{eq:diff}, then one does not need access to derivatives of $\sigma_{\theta}(x)$.  Although SRK methods may be under-appreciated in the statistics and UQ literature several higher-order (in terms of weak or strong errors) methods have been considered for hypo-elliptic diffusions \cite{ditvelson,iguchi}. Indeed \cite{iguchi} advocate their methodology as a general discretization scheme for (elliptic and hypo-elliptic) diffusion processes. We remark however, that whilst such a method may be useful in general, to the best of our knowledge there are no strong error results and the error orders can be improved in several contexts. Personal communication with the authors \cite{iguchi} suggest that the strong order will be 1 under the assumptions in that paper, but at present there is no proof.

This article is structured as follows. In Section \ref{sec:approach} we detail the approach that is used in this article. 
In Section \ref{sec:math} our mathematical results are given. In Section \ref{sec:numerics} we present several numerical results.

%

\section{Approach}\label{sec:approach}

Throughout this article we assume the following.
\begin{hypD}\label{hyp:d1}
\begin{itemize}
\item{$\mu_{\theta}$ and $\sigma_{\theta}$ are globally Lipschitz functions: there exists a $C<+\infty$ such that for any $(x,x',\theta)\in\mathbb{R}^{2d}\times\Theta$
$$
\|\mu_{\theta}(x)-\mu_{\theta}(x')\| + \|\sigma_{\theta}(x)-\sigma_{\theta}(x')\| \leq C\|x-y\|
$$
where if $x\in\mathbb{R}^d$, $\|x\|$ is the $\mathbb{L}_2-$norm and if $x\in\mathbb{R}^{d\times d}$
$\|x\|^2=\sum_{(i,j)\in\{1,\dots,d\}^2}x_{ij}^2$.}
\item{$\mu_{\theta}$ and $\sigma_{\theta}$ satisfy a growth condition: there exists a $C<+\infty$ such that for 
any $(x,\theta)\in\mathbb{R}^{d}\times\Theta$
$$
\|\mu_{\theta}(x)\| +  \|\sigma_{\theta}(x)\| \leq C\left(1+\|x\|\right).
$$
}
\item{For each $\theta\in\Theta$, $\sigma_{\theta}(x)$ is differentiable in $x$.}
\end{itemize}
\end{hypD}
We note that this assumption may be strengthened at various points in the article, but it is taken as a minimum requirement.

\subsection{Discretization Scheme}\label{sec:disc_scheme}

To describe our approach we introduce the following function, for $x\in\mathbb{R}^d$:
$$
\overline{\mu}_{\theta}(x) := \mu_{\theta}(x) - \lambda \overline{\sigma}_{\theta}(x)
$$ 
where $\lambda\in\mathbb{R}$ is a constant to be determined below and 
$\overline{\sigma}_{\theta}(x)$ is a $d\times 1$ vector where for $i\in\{1,\dots,d\}$
$$
\overline{\sigma}_{\theta}^{(i)}(x) = \sum_{(p,j)\in\{1,\dots,d\}^2}\frac{\partial\sigma_{\theta}^{(ip)}}{\partial x_j}(x)
\sigma_{\theta}^{(jp)}(x)
$$
with $x_j$ the $j^{th}$ element of the $d-$vector $x$, $\overline{\sigma}_{\theta}^{(i)}$ denoting the $i^{th}-$element of $\overline{\sigma}_{\theta}$ and
$\sigma_{\theta}^{(ip)}$ the $(i,p)^{th}-$element of $\sigma_{\theta}$.

We focus our attention on general $s-$stage SRK methods ($s\in\mathbb{N}$) for \eqref{eq:diff} to be given below (see e.g.~\cite[Eq.~(26)]{bur_bur}). Let $A$ and $B$ be $s\times s$ strictly lower triangular matrices, $\alpha$ and $\gamma$ be $s\times 1$ vectors with
$$
\sum_{p=1}^{s}\alpha_p=\sum_{p=1}^{s}\gamma_p=1
$$
and $\lambda = \gamma^{\top}Be$ where $e$ is the $d\times 1$ vector of ones. Then for $(k,j)\in\{0,\dots, T\Delta_l^{-1}-1\}\times\{1,\dots,s\}$ we consider the discretization
\begin{eqnarray}
X_{(k+1)\Delta_l}^l & = & X_{k\Delta_l}^l + \Delta_l\sum_{p=1}^s \alpha_p\overline{\mu}_{\theta}(V_{k\Delta_l}^l(p)) + 
\left\{\sum_{p=1}^s \gamma_p\sigma_{\theta}(V_{k\Delta_l}^l(p))\right\}\left[W_{(k+1)\Delta_l} - W_{k\Delta_l}\right] 
\label{eq:x_update}
\\
V_{k\Delta_l}^l(j) & = & X_{k\Delta_l}^l + \Delta_l\sum_{p=1}^sa_{jp}\overline{\mu}_{\theta}(V_{k\Delta_l}^l(p)) +
\left\{\sum_{p=1}^s b_{jp}\sigma_{\theta}(V_{k\Delta_l}^l(p))\right\}\left[W_{(k+1)\Delta_l} - W_{k\Delta_l}\right].
\label{eq:v_update}
\end{eqnarray}
It is remarked that this discretization is by no means the most sophisticated or the best in terms of strong/weak errors, but are relatively fast to compute, which the main point. To illustrate, we present the 
stochastic Heun (SH) method e.g.~\cite[Eq.~(2.5)]{rum}, which is described in Algorithm \ref{alg:rkm} and a stochastic classical
Runge-Kutta scheme in Algorithm \ref{alg:rkm1}. We note, as we will discuss in Section \ref{sec:wz} using classical ODE-type solvers is not necessarily the `optimal' way to proceed, but Algorithm \ref{alg:rkm1} has good strong error properties under some assumptions as we now discuss.

It is shown in various articles e.g.~\cite{rum} and under mathematical assumptions that one has
$$
\mathbb{E}\left[\|X_T-X_T^l\|^2\right] =\mathcal{O}(\Delta_l^{\beta})
$$
where $\beta>0$. For instance, under assumptions, \cite[Theorem 3 (ii)]{rum} states that $\beta=3$ for the SH method in Algorithm \ref{alg:rkm}. When $d=1$, under assumptions, \cite[Theorem 2 (ii)]{rum} shows that $\beta=4$ in the SRK method of Algorithm \ref{alg:rkm1}.  

\begin{algorithm}
\begin{enumerate}
\item{Input: Level of discretization $l$, initial point $X_0^l=x_0^l$, final time $T\in\mathbb{N}$, path of Brownian motion $W_0,W_{\Delta_l},\dots,W_{T}$ and parameter $\theta$.}
\item{Initialize: Set $X_0^l=x_0$, $k=0$.}
\item{Compute the coefficients ($\lambda=\tfrac{1}{2}$):
\begin{eqnarray*}
V_{k\Delta_l}^l(1) & = & X_{k\Delta_l}^l \\
V_{k\Delta_l}^l(2) & = & X_{k\Delta_l}^l + \Delta_l\overline{\mu}_{\theta}(V_{k\Delta_l}^l(1)) + \sigma_\theta(V_{k\Delta_l}^l(1))\left[W_{(k+1)\Delta_l} - W_{k\Delta_l}\right]. 
\end{eqnarray*}
}
\item{Update:
\begin{eqnarray*}
X_{(k+1)\Delta_l}^l & = & X_{k\Delta_l}^l + \frac{1}{2}\left(
\left\{\sum_{p=1}^2\overline{\mu}_{\theta}(V_{k\Delta_l}^l(p))\right\}
\Delta_l + 
\left\{\sum_{p=1}^2\sigma_{\theta}(V_{k\Delta_l}^l(p))\right\}
\left[W_{(k+1)\Delta_l} - W_{k\Delta_l}\right]\right)\\
k & = & k+1.
\end{eqnarray*}
If $k=T\Delta_l^{-1}-1$ go to 5.~otherwise return to 3..}
\item{Return the solution $X_T^l$.}
\end{enumerate}
\caption{Stochastic Heun Method}
\label{alg:rkm}
\end{algorithm}

\begin{algorithm}
\begin{enumerate}
\item{Input: Level of discretization $l$, initial point $X_0^l=x_0^l$, final time $T\in\mathbb{N}$, path of Brownian motion $W_0,W_{\Delta_l},\dots,W_{T}$ and parameter $\theta$.}
\item{Initialize: Set $X_0^l=x_0$, $k=0$.}
\item{Compute the coefficients ($\lambda=\tfrac{1}{2}$):
\begin{eqnarray*}
V_{k\Delta_l}^l(1) & = & X_{k\Delta_l}^l \\
V_{k\Delta_l}^l(j) & = & X_{k\Delta_l}^l + \tfrac{1}{2}\left(\Delta_l\overline{\mu}_{\theta}(V_{k\Delta_l}^l(j-1)) + \sigma_\theta(V_{k\Delta_l}^l(j-1))\left[W_{(k+1)\Delta_l} - W_{k\Delta_l}\right]\right) \quad j\in\{2,3\} \\
V_{k\Delta_l}^l(4) & = & X_{k\Delta_l}^l + \Delta_l\overline{\mu}_{\theta}(V_{k\Delta_l}^l(3)) + \sigma_\theta(V_{k\Delta_l}^l(3))\left[W_{(k+1)\Delta_l} - W_{k\Delta_l}\right]
\end{eqnarray*}
}
\item{Update: $\gamma=\alpha=(\tfrac{1}{6},\tfrac{2}{6},\tfrac{2}{6},\tfrac{1}{6})^{\top}$
\begin{eqnarray*}
X_{(k+1)\Delta_l}^l & = & X_{k\Delta_l}^l + \Delta_l\sum_{p=1}^4 \alpha_p\overline{\mu}_{\theta}(V_{k\Delta_l}^l(p)) + 
\left\{\sum_{p=1}^4 \gamma_p\sigma_{\theta}(V_{k\Delta_l}^l(p))\right\}\left[W_{(k+1)\Delta_l} - W_{k\Delta_l}\right] 
 \\
k & = & k+1.
\end{eqnarray*}
If $k=T\Delta_l^{-1}-1$ go to 5.~otherwise return to 3..}
\item{Return the solution $X_T^l$.}
\end{enumerate}
\caption{Classic Runge Kutta Method}
\label{alg:rkm1}
\end{algorithm}

From herein, we will denote the transition over unit time that is induced by the discretization as $f_{\theta}^l(dx_{k+1}|x_k)$. That is, by $\Delta_l^{-1}$ iterations of \eqref{eq:x_update}, which requires the use of $s-$applications of \eqref{eq:v_update} at each update.


\subsubsection{A Note on Wong-Zakai Approximation}\label{sec:wz}

We suppose $d=1$ in this section.
For $k\in\{0,\Delta_l,\dots,T\Delta_l^{-1}-1\}$, $t\in[k\Delta_l,(k+1)\Delta_l]$ set
$$
W_t^l := W_{k\Delta} + \tfrac{t-k\Delta_l}{\Delta_l}\left(W_{(k+1)\Delta_l}-W_{k\Delta_l}\right)
$$
and this implies that the time derivative is
$$
\dot{W}_t^l = \tfrac{1}{\Delta_l}\left(W_{(k+1)\Delta_l}-W_{k\Delta_l}\right).
$$
Now, under regularity conditions, it is well-known \cite{wong} that the solution to the random ODE ($\lambda=\tfrac{1}{2}$):
\begin{equation}\label{eq:rode}
d\widetilde{X}_t^l = \overline{\mu}_{\theta}(\widetilde{X}_t^l)dt + \sigma_{\theta}(\widetilde{X}_t^l)\dot{W}_t^l dt
\end{equation}
converges almost surely, as $l\rightarrow\infty$, to the solution of \eqref{eq:diff}; this is the well-known method of Wong-Zakai approximation. 

This suggests that one can also apply classical ODE solvers to approximate the transition dynamics of \eqref{eq:diff}.
As has been noted in the literature (e.g.~\cite{bur_bur}), naively using well-known ODE methods may not work well for diffusion problems. In addition, in some cases, considering numerical methods for the above random ODE can (see e.g.~\cite[Section 2.3]{rum}) lead to the incorrect solution being approximated. Indeed, \cite{rum} concludes that in this context `one always operates with some Runge Kutta scheme $\dots$ which needs its special correction'.  Thus, in general, it seems prudent to follow the SRK framework that we have adopted.

\subsection{Model of Interest and Multilevel Approximation}

Given the above exposition, our target of interest is to consider the posterior (probability) measure for some fixed $l\in\mathbb{N}$:
$$
\Pi_l\left(d(x_{1:T},\theta)\right) \propto \left\{\prod_{k=1}^T g_{\theta}(y_k|x_k)f_{\theta}^l(dx_k|x_{k-1})\right\}\pi(\theta)d\theta
$$
where $x_0^l=x_0$.

To describe our method, we first recapitulate the MLMC method and the approach that is detailed in \cite{jasra_bpe_sde}. Let $\varphi:\mathbb{R}^{Td}\times\Theta\rightarrow\mathbb{R}$ be a bounded and continuous
function ($\mathcal{C}_b(\mathbb{R}^{Td}\times\Theta)$) and for $l\in\mathbb{N}$ fixed, we write 
$$
\Pi_l(\varphi):=\int_{\mathbb{R}^{Td}\times\Theta}\varphi(x_{1:T},\theta)\Pi_l\left(d(x_{1:T},\theta)\right)
$$
respectively we write
$$
\Pi(\varphi) :=\int_{\mathbb{R}^{Td}\times\Theta}\varphi(x_{1:T},\theta)\Pi\left(d(x_{1:T},\theta)\right).
$$
The objective is to approximate $\Pi(\varphi)$ for a wide variety of $\varphi\in\mathcal{C}_b(\mathbb{R}^{Td}\times\Theta)$, given that we have determined that we can at best work with $\Pi_l$. Note that our restriction to continuous and bounded functions is nothing more than convention to keep the technical discussion to a minimum - the methodology described here is easily extended to appropriately integrable functions.

It is well-known (e.g.~\cite{jasra_bpe_sde}) that in some scenarios that instead of considering simply $\Pi_l$ one considers a sequence $\Pi_0,\dots,\Pi_L$ for some $L$ to be determined. Then we consider the identity:
\begin{equation}\label{eq:ml_id}
\Pi_L(\varphi) = \Pi_0(\varphi) + \sum_{l=1}^L\left\{\Pi_l(\varphi)-\Pi_{l-1}(\varphi)\right\}.
\end{equation}
The objective now is to develop a simulation-based method to approximate the R.H.S.~of \eqref{eq:ml_id}.

Approximating $\Pi_0(\varphi)$ is possible using an MCMC method and we detail this in the next section. The approximation of the difference $\Pi_l(\varphi)-\Pi_{l-1}(\varphi)$ requires a little more effort and we begin by defining a coupling of the discretization scheme that is detailed in Section \ref{sec:disc_scheme} across two levels $l$ and $l-1$; this is presented in Algorithm \ref{alg:coup_meth}. We denote, for $k\in\{0,\dots,T-1\}$, by 
$\check{f}_{\theta}^l\left(d(x_{k+1}^l,x_{k+1}^l))|(x_k^l,x_k^{l-1})\right)$ the transition induced by the coupling described in Algorithm \ref{alg:coup_meth}, over a unit time ($T=1$) with initial points $(x_k^l,x_k^{l-1})$.
We note that it easily follows that for any $(x_k^l,x_k^{l-1},\theta)\in\mathbb{R}^{2d}\times\Theta$ and Borel set $C$ on $\mathbb{R}^{d}$ we have
\begin{eqnarray}
\int_{C} f^l_{\theta}(dx|x_k^l) & = & \int_{C\times\mathbb{R}^d} \check{f}_{\theta}^l\left(d(x_{k+1}^l,x_{k+1}^l))|(x_k^l,x_k^{l-1})\right) \label{eq:trans_marginal1}\\
\int_{C} f^{l-1}_{\theta}(dx|x_k^{l-1}) & = & \int_{\mathbb{R}^d\times C} \check{f}_{\theta}^l\left(d(x_{k+1}^l,x_{k+1}^l))|(x_k^l,x_k^{l-1})\right). \label{eq:trans_marginal2}
\end{eqnarray}
Set for $(x,x')\in\mathbb{R}^{2d}$ and $k\in\{1,\dots,T\}$
$$
\check{g}_{k,\theta}(x,x') = \max\left\{g_{\theta}(y_k|x),g_{\theta}(y_k|x')\right\}.
$$

We consider the following probability measure on $\mathbb{R}^{2dT}\times\Theta$, $l\in\mathbb{N}$:
$$
\check{\Pi}_{l,l-1}\left(d(x_{1:T}^l,x_{1:T}^{l-1},\theta)\right) \propto 
\left\{\prod_{k=1}^T\check{g}_{k,\theta}(x_k^l,x_k^{l-1})\check{f}_{\theta}^l\left(d(x_{k}^l,x_k^{l-1}))|(x_{k-1}^l,x_{k-1}^{l-1})\right)\right\}\pi(\theta)d\theta
$$
where $(x_0^l,x_0^{l-1})=(x_0,x_0)$. For $\varphi\in\mathcal{C}_b(\mathbb{R}^{2Td}\times\Theta)$, we use the notation
$$
\check{\Pi}_{l,l-1}\left(\varphi\right) = \int_{\mathbb{R}^{2Td}\times\Theta}\varphi(x_{1:T}^l,x_{1:T}^{l-1},\theta)\check{\Pi}_{l,l-1}\left(d(x_{1:T}^l,x_{1:T}^{l-1},\theta)\right).
$$
If $\varphi\in\mathcal{C}_b(\mathbb{R}^{Td}\times\Theta)$ we write for $(x_{1:T},x_{1:T}')\in\mathbb{R}^{2Td}$
\begin{eqnarray*}
\varphi \otimes 1(x_{1:T},x_{1:T}',\theta) & = & \varphi(x_{1:T},\theta) \\
1\otimes \varphi(x_{1:T},x_{1:T}',\theta) & = & \varphi(x_{1:T}',\theta).
\end{eqnarray*}
Set for $(x_{1:T},x_{1:T}')\in\mathbb{R}^{2Td}$
\begin{eqnarray*}
\check{H}_{1}(x_{1:T},x_{1:T}',\theta) & = & \prod_{k=1}^T \frac{g_{\theta}(y_k|x_k)}{\check{g}_{k,\theta}(x_k,x_k')} \\
\check{H}_{2}(x_{1:T},x_{1:T}',\theta) & = & \prod_{k=1}^T \frac{g_{\theta}(y_k|x_k')}{\check{g}_{k,\theta}(x_k,x_k')}.
\end{eqnarray*}
Then just as in \cite{jasra_bpe_sde} one has that
$$
\Pi_l(\varphi)-\Pi_{l-1}(\varphi) = \frac{\check{\Pi}_{l,l-1}\left(\{\varphi\otimes 1\} \check{H}_{1}\right)}{\check{\Pi}_{l,l-1}\left(\check{H}_{1}\right)} - \frac{\check{\Pi}_{l,l-1}\left(\{1\otimes \varphi\} \check{H}_{2}\right)}{\check{\Pi}_{l,l-1}\left(\check{H}_{2}\right)}
$$
which we note easily extends to the context here as a result of \eqref{eq:trans_marginal1}-\eqref{eq:trans_marginal2}.

To summarize, to approximate \eqref{eq:ml_id}, one needs a method to 
approximate expectations w.r.t.~$\Pi_0$ and $\check{\Pi}_{l,l-1}$, $l\in\mathbb{N}$ and this is what we consider in the next section.

\begin{algorithm}[H]
\begin{enumerate}
\item{Input: Level of discretization $l\in\mathbb{N}$, initial points $(X_0^l,X_0^{l-1})=(x_0^l,x_0^{l-1})$,
final time $T\in\mathbb{N}$, path of Brownian motion $W_0,W_{\Delta_l},\dots,W_{T}$ and parameter $\theta$.}
\item{Run the discretization scheme in \eqref{eq:x_update}-\eqref{eq:v_update} at level $l$, with initial point $x_0^l$, the given path of Brownian motion and up-to time $T$.}
\item{Run the discretization scheme in \eqref{eq:x_update}-\eqref{eq:v_update} at level $l-1$, with initial point $x_0^{l-1}$, the given path of Brownian motion (same as step 2.) and up-to time $T$.}
\item{Return the solutions $(X_T^l,X_T^{l-1})$.}
\end{enumerate}
\caption{Coupling of Discretization Scheme in Eqs.~\eqref{eq:x_update}-\eqref{eq:v_update}}
\label{alg:coup_meth}
\end{algorithm}

\subsection{MCMC Method}

We now describe the two MCMC methods we use to approximate expectations w.r.t.~$\Pi_0$ and $\check{\Pi}_{l,l-1}$, $l\in\mathbb{N}$. We focus on using the MCMC method in \cite{andrieu}, but e.g.~an adaptation to the approach of \cite{graham} is also possible.

The approximation of $\Pi_0$ (and indeed is easily extended to any $\Pi_l$ for $l\in\mathbb{N}$) is described in Algorithm \ref{alg:mcmc_0} which needs the assistance of Algorithm \ref{alg:pf_0}. 
The notation in  Algorithm \ref{alg:mcmc_0}, $X_{1:T}^k(0)$, refers to the sampled trajectory of the diffusion, from time $1$ to time $T$, at iteration $k$ of the MCMC algorithm and the argument $(0)$ refers to the fact that one is sampling from the posterior with discretization level $0$.
Algorithm \ref{alg:mcmc_0} is simply the particle marginal Metropolis-Hastings (M-H) algorithm in \cite{andrieu} and the choice of $M$ is discussed in many articles, including \cite{andrieu}; typically $M=\mathcal{O}(T)$ is a sensible choice, which is the one we adopt.
The proposal for M-H algorithms has been discussed extensively in the literature; see e.g.~\cite{robert}.
Now, for $\varphi\in\mathcal{C}_b(\mathbb{R}^{Td}\times\Theta)$, to approximate $\Pi_0(\varphi)$ one has the following estimator:
$$
\Pi_0^N(\varphi)  = \frac{1}{N+1}\sum_{k=0}^N\varphi(X_{1:T}^{k}(0),\theta^k(0)).
$$

\begin{algorithm}[h!]
\begin{enumerate}
\item{Input: $M\in\mathbb{N}$ number of particles, $N$ number of iterations and a positive proposal kernel $q(\theta'|\theta)$ that can be simulated.}
\item{Initialize: Generate $\theta^0(0)$ from the prior and call Algorithm \ref{alg:pf_0} with parameter $\theta^{0}(0)$, $M$ number of particles. Denote the returned trajectory as $X_{1:T}^{0}(0)$ and the estimated log-normalizing constant as $p^{0,M}(y_{1:T})$. Set $k=0$}
\item{Iterate: Propose a new $\theta'$ using the $q(\cdot|\theta^k(0))$ and call Algorithm \ref{alg:pf_0} with parameter $\theta'$, $M$ number of particles and denote $p^{M,'}(y_{1:T})$ the returned value of the estimated log-normalizing constant and $X_{1:T}'$ the returned trajectory. Compute
$$
R = \log\left(\frac{\pi(\theta')q(\theta^k(0)|\theta')}{\pi(\theta^k(0))q(\theta'|\theta^k(0))}\right) + p^{0,M,'}(y_{1:T}) - p^{k,M}(y_{1:T})
$$
and generate $U\sim\mathcal{U}_{[0,1]}$ (uniform distribution on $[0,1]$). If $\log(U)<R$ set $\theta^{k+1}(0)=\theta'$, $p^{k+1,M}(y_{1:T})=p^{M,'}(y_{1:T})$, 
and $X_{1:T}^{k+1}(0) = X_{1:T}'$. Otherwise set $\theta^{k+1}(0)=\theta^{k}(0)$, $p^{k+1,M}(y_{1:T})=p^{k,M}(y_{1:T})$ 
and $X_{1:T}^{k+1}(0) = X_{1:T}^{k}(0)$. If $k=N$ go to step 4.~otherwise set $k=k+1$ and go to the start of step 3..}
\item{Return $\left((X_{1:T}^{0}(0),\theta^0(0)),\dots,(X_{1:T}^{N}(0),\theta^N(0))\right)$.}
\end{enumerate}
\caption{MCMC method for $\Pi_0$}
\label{alg:mcmc_0}
\end{algorithm}

\begin{algorithm}[h!]
\begin{enumerate}
\item{Input: $\theta$, $M\in\mathbb{N}$, initial point $x_0$ and final time $T$.}
\item{Initialize: Generate $X_1^1,\dots,X_1^M$ using $f^0_{\theta}(\cdot|x_0)$ and set $k=1$, $p^M(y_{1:-1}) = 0$.}
\item{Compute the normalize weights, for $i\in\{1,\dots,M\}$
$$
U_k^i = \frac{g_{\theta}(y_k|x_k^i)}{\sum_{j=1}^Mg_{\theta}(y_k|x_k^j)}
$$
and update $p^M(y_{1:k}) = p^M(y_{1:k-1}) + \log\left(\tfrac{1}{M}\sum_{j=1}^M g_{\theta}(y_k|x_k^j)\right)$. If $k=T$ go to 6., otherwise go to the next step.}
\item{Sample, with replacement (resampling) from $(x_{1:k}^1,\dots,x_{1:k}^M)$ using the normalized weights $(U_k^1,\dots,U_k^M)$ and denote the new samples as
$(x_{1:k}^1,\dots,x_{1:k}^M)$.}
\item{Sample, for $i\in\{1,\dots,M\}$, $X_{k+1}^i|x_k^i$ using  $f^0_{\theta}(\cdot|x_k^i)$ and set $k=k+1$ and go to step 3..}
\item{Pick a single trajectory $X_{1:T}^i$ using the normalized weights and return this and $p^M(y_{1:T})$.}
\end{enumerate}
\caption{Particle Filter (PF)}
\label{alg:pf_0}
\end{algorithm}

We now consider approximating expectations w.r.t.~$\check{\Pi}_{l,l-1}$, $l\in\mathbb{N}$. This is summarized in 
Algorithm \ref{alg:mcmc_l} which needs the assistance of Algorithm \ref{alg:pf_l} (the Delta particle filter; see \cite{chada_ub,jasra_bpe_sde,ub_levy}). Now, for $\varphi\in\mathcal{C}_b(\mathbb{R}^{2Td}\times\Theta)$, to approximate $\check{\Pi}_{l,l-1}(\varphi)$ one has the following estimator:
$$
\check{\Pi}_{l,l-1}^N(\varphi)  = \frac{1}{N+1}\sum_{k=0}^N\varphi(X_{1:T}^{k,l}(l),X_{1:T}^{k,l-1}(l),\theta^k(l)).
$$

We are now in a position to describe our approach to approximate the R.H.S.~of \eqref{eq:ml_id}. We need choices of $L$ and the number of iterations at levels $0,\dots,L$, which we denote by $N_0,\dots,N_L$. We will show how these can be chosen in the next section. The steps of the main approach are given in Algorithm \ref{alg:main_alg}. 

\begin{algorithm}[h!]
\begin{enumerate}
\item{Input level $l\in\mathbb{N}$, $M\in\mathbb{N}$ number of particles, $N$ number of iterations and a positive proposal kernel $q(\theta'|\theta)$ that can be simulated.}
\item{Initialize: Generate $\theta^0(l)$ from the prior and call Algorithm \ref{alg:pf_l} with parameter $\theta^{0}(l)$, $M$ number of particles. Denote the returned trajectory as $(X_{1:T}^{0,l}(l),X_{1:T}^{0,l-1}(l))$ and the estimated log-normalizing constant as $\check{p}^{0,M}(y_{1:T})$. Set $k=0$}
\item{Iterate: Propose a new $\theta'$ using the $q(\cdot|\theta^k(l))$ and call Algorithm \ref{alg:pf_l} with parameter $\theta'$, $M$ number of particles and denote $\check{p}^{M,'}(y_{1:T})$ the returned value of the estimated log-normalizing constant and $(X_{1:T}^{',l},X_{1:T}^{',l-1})$ the returned trajectory. Compute
$$
R = \log\left(\frac{\pi(\theta')q(\theta^k(0)|\theta')}{\pi(\theta^k(0))q(\theta'|\theta^k(0))}\right) + \check{p}^{M,'}(y_{1:T}) - \check{p}^{k,M}(y_{1:T})
$$
and generate $U\sim\mathcal{U}_{[0,1]}$. If $\log(U)<R$ set $\theta^{k+1}(l)=\theta'$, $\check{p}^{k+1,M}(y_{1:T})=\check{p}^{M,'}(y_{1:T})$
and $(X_{1:T}^{k+1,l}(l),X_{1:T}^{k+1,l-1}(l)) = (X_{1:T}^{',l},X_{1:T}^{',l-1})$. Otherwise set $\theta^{k+1}(0)=\theta^{k}(0)$, $\check{p}^{k+1,M}(y_{1:T})=\check{p}^{M,k}(y_{1:T}) $
and $(X_{1:T}^{k+1,l}(l),X_{1:T}^{k+1,l-1}(l)) =(X_{1:T}^{k,l}(l),X_{1:T}^{k,l-1}(l))$. If $k=N$ go to step 4.~otherwise set $k=k+1$ and go to the start of step 3..}
\item{Return $\left((X_{1:T}^{0,l}(l),X_{1:T}^{0,l-1}(l),\theta^0(0)),\dots,(X_{1:T}^{N,l}(l),X_{1:T}^{N,l-1}(l),\theta^N(0))\right)$.}
\end{enumerate}
\caption{MCMC method for $\check{\Pi}_{l,l-1}$}
\label{alg:mcmc_l}
\end{algorithm}

\begin{algorithm}[h!]
\begin{enumerate}
\item{Input level $l\in\mathbb{N}$, $\theta$, $M\in\mathbb{N}$, initial point $x_0$ and final time $T$.}
\item{Initialize: Generate $(X_1^{1,l},X_1^{1,l-1}),\dots,(X_1^{M,l},X_1^{M,l-1})$ using $\check{f}^l_{\theta}\left(\cdot|(x_0,x_0)\right)$ and set $k=1$, $\check{p}^M(y_{1:-1}) = 0$.}
\item{Compute the normalize weights, for $i\in\{1,\dots,M\}$
$$
\check{U}_k^i = \frac{\check{g}_{k,\theta}(x_k^{i,l},x_k^{i,l-1})}{\sum_{j=1}^M \check{g}_{k,\theta}(x_k^{j,l},x_k^{j,l-1})}
$$
and update $\check{p}^M(y_{1:k}) = \check{p}^M(y_{1:k-1}) + \log\left(\tfrac{1}{M}\sum_{j=1}^M 
\check{g}_{k,\theta}(x_k^{j,l},x_k^{j,l-1})\right)$. If $k=T$ go to 6., otherwise go to the next step.}
\item{Resample $\left((x_{1:k}^{1,l},x_{1:k}^{1,l-1}),\dots,(x_{1:k}^{M,l},x_{1:k}^{M,l-1})\right)$ using the normalized weights $(\check{U}_k^1,\dots,\check{U}_k^M)$ and denote the new samples as
$\left((x_{1:k}^{1,l},x_{1:k}^{1,l-1}),\dots,(x_{1:k}^{M,l},x_{1:k}^{M,l-1})\right)$.}
\item{Sample, for $i\in\{1,\dots,M\}$, $(X_{k+1}^{i,l},X_{k+1}^{i,l-1})|(x_k^{i,l},x_k^{i,l-1})$ using  $f^l_{\theta}\left(\cdot|(x_k^{i,l},x_k^{i,l-1})\right)$ and set $k=k+1$ and go to step 3..}
\item{Pick a single trajectory $(X_{1:T}^{i,l},X_{1:T}^{i,l-1})$ using the normalized weights and return this and $\check{p}^M(y_{1:T})$.}
\end{enumerate}
\caption{Delta Particle Filter}
\label{alg:pf_l}
\end{algorithm}

\begin{algorithm}
\begin{enumerate}
\item Input: $L$ the target discretization level and $\{N_l\}_{l=0}^L$ the number of MCMC iterations.
\item Generate $N_0$ samples using Algorithm \ref{alg:mcmc_0}.
\item For $l\in\{1,\dots,L\}$ independently, run Algorithm \ref{alg:mcmc_l} for $N_l$ iterations.
\item Return the following approximation of $\Pi_L(\varphi)$ 
\begin{equation}\label{eq:ml_approx}
\Pi_L^{N_{0:L}}(\varphi) := \Pi_0^{N_0}(\varphi) + \sum_{l=1}^{L} \left\{
\frac{\check{\Pi}_{l,l-1}^{N_l}(\{\varphi\otimes 1\}H_1)}{\check{\Pi}_{l,l-1}^{N_l}(H_1)}
-
\frac{\check{\Pi}_{l,l-1}^{N_l}(\{1\otimes\varphi\}H_2)}{\check{\Pi}_{l,l-1}^{N_l}(H_2)}
\right\}.
\end{equation}
\end{enumerate}
\caption{Main Approach}
\label{alg:main_alg}
\end{algorithm}

\section{Mathematical Results}\label{sec:math}

The objective of this section is to provide meaningful bounds in terms of $(\Delta_0,N_0),\dots,(\Delta_L,N_L)$:
$$
\mathbb{E}\left[\left(\Pi_L^{N_{0:L}(\varphi)}-\Pi(\varphi)\right)^2\right]
$$
where the expectation operator is associated to the randomness needed to compute \eqref{eq:ml_approx}
(i.e.~on a suitably defined probability space, of which the details are omitted for brevity). We critically assume in this section that the Markov chain samplers defined in Algorithms \ref{alg:mcmc_0} and \ref{alg:mcmc_l}  are initialized from the stationary distributions (this will exist under the assumptions given below); that is in Step 2.~in both 
Algorithms \ref{alg:mcmc_0} and \ref{alg:mcmc_l} starts from samples from the stationary distribution. This assumption is discussed in both \cite{jasra_bpe_sde,jasra_cont}.

We begin by considering the two discretization schemes in Algorithms \ref{alg:rkm}-\ref{alg:rkm1} and 
 these are the only two that are considered in our analysis. To that end we will make some assumptions that are needed for the analysis. Below for a function $\varphi:\mathsf{X}\rightarrow\mathbb{R}^p$ for some $p\geq 1$, we write as $\mathcal{C}_b^k(\mathsf{X})$ as the collection of
functions on $\mathsf{X}$ to $\mathbb{R}^p$ that are
$k$ times continuously differentiable which are bounded and have bounded derivatives. For $\sigma_{\theta}(x)$,
we denote the $j^{\textrm{th}}-$column as $\sigma_{\theta}^{\cdot,j}(x)$, $j\in\{1,\dots,d\}$ and the associated $d\times d$ matrix of first derivatives in $x$ as $\nabla\sigma_{\theta}^{\cdot,j}(x)$.

\begin{hypD}\label{hyp:d2}
\begin{enumerate}
\item{For  every fixed $\theta\in\Theta$, $(\mu_{\theta},\sigma_{\theta})\in\mathcal{C}_b^6(\mathbb{R})\times\mathcal{C}_b^6(\mathbb{R})$. Denote for $j\in\{0,\dots,6\}$
$$
C_{\theta}^j = \max\left\{\sup_{x\in\mathbb{R}}\Bigg|\frac{\partial^j \mu_{\theta}}{\partial x^j}(x)\Bigg|,\sup_{x\in\mathbb{R}}\Bigg|\frac{\partial^j \sigma_{\theta}}{\partial x^j}(x)\Bigg|\right\}
$$
when $j=0$ the `zeroth' derivative is the function itself. Then, in addition:
$$
\sup_{\theta\in\Theta}\max\{C_{\theta}^0,\dots,C_{\theta}^6\} <+\infty.
$$
}
\item{For each $(x,\theta)\in\mathbb{R}^d\times\Theta$ and every $(i,j)\in\{1,\dots,d\}^2$ we have
$$
\nabla\sigma_{\theta}^{\cdot,j}(x)\sigma_{\theta}^{\cdot,i}(x) = \nabla\sigma_{\theta}^{\cdot,i}(x)\sigma_{\theta}^{\cdot,j}(x).
$$
}
\end{enumerate}
\end{hypD}

\begin{hypD}\label{hyp:d3}
\begin{enumerate}
\item{$d=1$.} 
\item{For  every fixed $\theta\in\Theta$, $(\mu_{\theta},\sigma_{\theta})\in\mathcal{C}_b^4(\mathbb{R})\times\mathcal{C}_b^4(\mathbb{R})$. In addition:
$$
\sup_{\theta\in\Theta}\max\{C_{\theta}^0,\dots,C_{\theta}^4\} <+\infty.
$$
}
\item{We have that for  every $(x,\theta)\in\mathbb{R}\times\Theta$
$$
\sigma_{\theta}(x)\frac{\partial\mu_{\theta}}{\partial x}(x) - \mu_{\theta}(x) \frac{\partial\sigma_{\theta}}{\partial x}(x)
-\frac{1}{2}\sigma_{\theta}(x)^2\frac{\partial^2\sigma_{\theta}}{\partial x^2}(x) = 0.
$$
}
\end{enumerate}
\end{hypD}

The assumption (D\ref{hyp:d2}) will relate to using Algorithm \ref{alg:rkm} as the discretization and is essentially an application of \cite[Theorem 3 (ii)]{rum}. One remark here is that on the basis of \cite[Theorem 1]{bur_bur}, we have inferred that smoothness in the statement of  \cite[Theorem 3]{rum} is the condition $(\mu_{\theta},\sigma_{\theta})\in\mathcal{C}_b^6(\mathbb{R})\times\mathcal{C}_b^6(\mathbb{R})$. The assumption (D\ref{hyp:d3}) means that we use Algorithm \ref{alg:rkm1} as the discretization scheme and is the one used in \cite[Theorem 2 (ii)]{rum}.

To continue our discussion, we will need some assumptions on the Markov chain samplers in Algorithms \ref{alg:mcmc_0} and \ref{alg:mcmc_l} as well as the likelihood function $g_{\theta}(y|x)$. These assumptions are used in \cite{jasra_bpe_sde}. We denote the Markov transition kernels (one step) that are described in 
Algorithms \ref{alg:mcmc_0} and \ref{alg:mcmc_l} as $K_0$ and $K_l$ ($l\in\mathbb{N}$) respectively. We shall denote the state-spaces of the transition kernels as $\mathsf{W}_l$, $l\in\mathbb{N}_0=\mathbb{N}\cup\{0\}$. The stationary distributions of $K_l$ are denoted as $\nu_l$. The spaces and stationary distributions are deliberately abstract as they are quite complicated (see for instance \cite{andrieu}).
For a function $\varphi:\mathsf{X}\rightarrow\mathbb{R}$, we denote by $\textrm{Lip}_b(\mathsf{X})$ the collection of bounded, measurable and Lipschitz functions.

\begin{hypA}\label{hyp:a1}
\begin{enumerate}
\item{For each $y\in\mathsf{Y}$ there exist $0<\underline{C}<\overline{C}<+\infty$ such that for each
$(x,\theta)\in\mathbb{R}^d\times\Theta$
$$
\underline{C} \leq g_{\theta}(y|x)\leq \overline{C}.
$$
}
\item{There exists a $C\in(0,1)$ such that for any $l\in\mathbb{N}_0$ there exist a probability measure $\xi$ on $\mathsf{W_l}$, such that for any $\varphi\in\textrm{Lip}_b(\mathsf{W}_l)$ and $u\in\mathsf{W}_l$
$$
\int_{\mathsf{W}_l} \varphi(u')K_l(u,du') \geq C\int_{\mathsf{W}_l}\varphi(u')\xi(du').
$$
}
\item{For each $l$, $K_l$ is $\nu_l$-reversible.}
\end{enumerate}
\end{hypA}

(A\ref{hyp:a1}) has been discussed in detail in \cite{jasra_bpe_sde} and we refer the reader there for discussion. The notation of $\eta_l-$reversibility is also defined there. To finally state our result we use the notation
$$
\mathsf{U}(N_{0:L},\Delta_{0:L},\beta) := \sum_{l=0}^L\frac{\Delta_l^{\beta}}{N_l} 
+\sum_{(l,p)\in\{1,\dots,L\}, l\neq p}\frac{\Delta_l^{\beta/2}\Delta_p^{\beta/2}}{N_lN_p} + 
\Delta_L^{\beta}.
$$
We have the following result.

\begin{prop}\label{prop:main_res}
Assume (A\ref{hyp:a1}) and either (D\ref{hyp:d1}-\ref{hyp:d2}) or (D\ref{hyp:d1},\ref{hyp:d3}). Then for any $\varphi\in\textrm{\emph{Lip}}_b(\mathbb{R}^d\times\Theta)$ there exist a $C<+\infty$ such that for any $L\in\mathbb{N}$, $N_{0:L}\in\mathbb{N}^{L+1}$:
$$
\mathbb{E}\left[\left(\Pi_L^{N_{0:L}}(\varphi)-\Pi(\varphi)\right)^2\right] \leq C~\mathsf{U}(N_{0:L},\Delta_{0:L},\beta)
$$
where $\beta=3$ under (D\ref{hyp:d1}-\ref{hyp:d2}), or $\beta=4$ under (D\ref{hyp:d1},\ref{hyp:d3}).
\end{prop}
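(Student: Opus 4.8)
The result is a quantitative bias--variance decomposition following the template of \cite{jasra_bpe_sde}, with the Euler--Maruyama rate replaced everywhere by the strong rate $\beta$ supplied by the SRK schemes of Algorithms~\ref{alg:rkm}--\ref{alg:rkm1}. First I would split, via $(a+b)^2\le 2a^2+2b^2$,
$$
\mathbb{E}\left[\left(\Pi_L^{N_{0:L}}(\varphi)-\Pi(\varphi)\right)^2\right] \leq 2\left(\Pi_L(\varphi)-\Pi(\varphi)\right)^2 + 2\,\mathbb{E}\left[\left(\Pi_L^{N_{0:L}}(\varphi)-\Pi_L(\varphi)\right)^2\right],
$$
and bound the discretization bias (first term) and the statistical error (second term) in turn.

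\textbf{Discretization bias.} I would realise $X_{1:T}^L$ and the true skeleton $X_{1:T}$ on one probability space, driven by a common Brownian path, together with $\theta\sim\pi$; call the law $\mathbb{Q}$. With $\Pi_L(\varphi)=\gamma_L(\varphi)/\gamma_L(1)$, $\Pi(\varphi)=\gamma(\varphi)/\gamma(1)$ for the associated unnormalized measures, one has for $\psi\in\{1,\varphi\}$
$$
\gamma_L(\psi)-\gamma(\psi)=\mathbb{E}_{\mathbb{Q}}\Big[\psi(X_{1:T}^L,\theta)\prod_{k=1}^T g_\theta(y_k|X_k^L)-\psi(X_{1:T},\theta)\prod_{k=1}^T g_\theta(y_k|X_k)\Big].
$$
Since, by (A\ref{hyp:a1})(1), the likelihood factors are bounded above and below, $\gamma_L(1)$ and $\gamma(1)$ lie in a fixed compact subset of $(0,\infty)$; combined with the boundedness/Lipschitz control of $g_\theta$ and $\varphi$ this reduces matters to bounding $\sum_{k=1}^T(\mathbb{E}_{\mathbb{Q}}\|X_k^L-X_k\|^2)^{1/2}$. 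Now (D\ref{hyp:d1}) with (D\ref{hyp:d2}) (resp.\ (D\ref{hyp:d3})) is exactly the hypothesis under which \cite[Theorem~3(ii)]{rum} (resp.\ \cite[Theorem~2(ii)]{rum}) gives strong order $\beta=3$ for Algorithm~\ref{alg:rkm} (resp.\ $\beta=4$ for Algorithm~\ref{alg:rkm1}) for the It\^o equation \eqref{eq:diff} after the drift correction $\mu_\theta\mapsto\overline{\mu}_\theta$ with $\lambda=\gamma^\top Be$ --- and uniformly in $\theta\in\Theta$ by the uniform bounds on the $C_\theta^j$. This yields $\mathbb{E}_{\mathbb{Q}}\|X_k^L-X_k\|^2=\mathcal{O}(\Delta_L^\beta)$ for each $k\le T$, hence $|\Pi_L(\varphi)-\Pi(\varphi)|\le C\Delta_L^{\beta/2}$ and the bias term is $\mathcal{O}(\Delta_L^\beta)$, which is the last summand of $\mathsf{U}$.

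\textbf{Statistical error.} Put $D_l:=\Pi_l(\varphi)-\Pi_{l-1}(\varphi)$, let $\widehat D_l^{N_l}$ be the $l$-th bracketed term of \eqref{eq:ml_approx}, and set $e_0:=\Pi_0^{N_0}(\varphi)-\Pi_0(\varphi)$, $e_l:=\widehat D_l^{N_l}-D_l$ for $l\ge1$, so that $\Pi_L^{N_{0:L}}(\varphi)-\Pi_L(\varphi)=\sum_{l=0}^Le_l$ by \eqref{eq:ml_id}. Because Algorithms~\ref{alg:mcmc_0} and~\ref{alg:mcmc_l} are run independently across levels and, by the standing assumption, started in stationarity, the $e_l$ are independent with $\mathbb{E}[e_0]=0$, so
$$
\mathbb{E}\left[\left(\Pi_L^{N_{0:L}}(\varphi)-\Pi_L(\varphi)\right)^2\right]=\sum_{l=0}^L\mathbb{E}[e_l^2]+\sum_{\substack{l,p\in\{1,\dots,L\}\\ l\neq p}}\mathbb{E}[e_l]\,\mathbb{E}[e_p].
$$
It then remains --- this part being structurally identical to \cite{jasra_bpe_sde} --- to show (i) $\mathbb{E}[e_0^2]\le C/N_0=C\Delta_0^\beta/N_0$, a standard $1/N$ variance bound for the particle marginal Metropolis--Hastings chain using the uniform minorization (A\ref{hyp:a1})(2) and $\nu_l$-reversibility (A\ref{hyp:a1})(3); (ii) $\mathbb{E}[e_l^2]\le C\Delta_l^\beta/N_l$ and (iii) $|\mathbb{E}[e_l]|\le C\Delta_l^{\beta/2}/N_l$ for $l\ge1$. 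The driver of (ii)--(iii) is that the coupling of Algorithm~\ref{alg:coup_meth} is synchronous in the driving Brownian motion, hence inherits the strong rate via the triangle inequality through $X_T$ (using $\Delta_{l-1}=2\Delta_l$): $\mathbb{E}\|X_T^l-X_T^{l-1}\|^2=\mathcal{O}(\Delta_l^\beta)$. Consequently $\check g_{k,\theta}(x_k^l,x_k^{l-1})-g_\theta(y_k|x_k^l)$, $\check H_1-\check H_2$ and $\{\varphi\otimes1\}-\{1\otimes\varphi\}$ are all of order $\sum_k\|x_k^l-x_k^{l-1}\|$ under $\check\Pi_{l,l-1}$, which forces $D_l$ and both the fluctuation and the bias of its ratio estimator to carry a factor $\Delta_l^{\beta/2}$; the $1/N_l$ comes from the same ergodicity bounds as in (i), after linearising the ratio estimator and controlling the remainder through $\check\Pi_{l,l-1}(\check H_i)$ being bounded away from $0$ by (A\ref{hyp:a1})(1). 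Substituting (i)--(iii) and the bias bound into the displays gives precisely $C\,\mathsf{U}(N_{0:L},\Delta_{0:L},\beta)$.

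\textbf{Where the work is.} Since everything except the SRK input is a transcription of \cite{jasra_bpe_sde}, the crux is twofold: (a) verifying that (D\ref{hyp:d2}) and (D\ref{hyp:d3}) --- including the commutativity condition $\nabla\sigma_\theta^{\cdot,j}(x)\sigma_\theta^{\cdot,i}(x)=\nabla\sigma_\theta^{\cdot,i}(x)\sigma_\theta^{\cdot,j}(x)$ for $d>1$ and the scalar compatibility identity in (D\ref{hyp:d3}) --- are exactly those yielding the claimed strong orders in \cite[Theorems~2--3]{rum} for the corrected It\^o discretization, and that these hold uniformly in $\theta$; and (b) confirming that it is this $\mathbb{L}_2$ strong rate, not merely a weak rate, that propagates through the Delta particle filter and the ratio-estimator expansion. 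Estimates (ii)--(iii) of the statistical step are the most laborious, but --- with the value of $\beta$ in hand --- they introduce no new idea beyond \cite{jasra_bpe_sde}, so I do not expect a genuinely new obstacle there.
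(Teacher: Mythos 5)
Your proposal is correct and follows essentially the same route as the paper: the paper's proof likewise takes the strong rates $\beta=3,4$ from \cite[Theorems 3(ii), 2(ii)]{rum}, obtains the weak/bias rate $\Delta_L^{\beta/2}$ from the strong error via Jensen's inequality, and then invokes the bias--variance/ratio-estimator machinery of \cite{jasra_bpe_sde,jasra_cont} (whose per-level variance $\Delta_l^{\beta}/N_l$, per-level bias $\Delta_l^{\beta/2}/N_l$, and cross-term structure you have reconstructed accurately) to assemble $\mathsf{U}(N_{0:L},\Delta_{0:L},\beta)$. The only difference is that you spell out the steps the paper delegates to the cited works, which is consistent with their content.
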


\begin{proof}
The proof of this result, first needs the strong and weak errors associated to the SH and classical Runge Kutta methods. The strong errors are proved in  \cite[Theorem 3 (ii)]{rum} and \cite[Theorem 2 (ii)]{rum} respectively and are $\mathcal{O}(\Delta_l^3)$ and $\mathcal{O}(\Delta_l^4)$. The weak errors are easily bounded by using Jensen's inequality and the afore-mentioned strong errors. What remains is then to embed these results within the state-space model and MCMC framework of this article. This is exactly the subject that is tackled in \cite[Theorem 3.1]{jasra_bpe_sde}, \cite[Theorems 4.1, A.1, Proposition A.1]{jasra_cont}. The proofs of those articles can be repeated to conclude the result; this is omitted again for brevity.
\end{proof}

The implication of this result is then as normal in multilevel methods. Let $\epsilon>0$ be given. Then we choose $L$ so that $\Delta_L^{\beta}=\mathcal{O}(\epsilon^2)$. Then we can choose $N_l=\mathcal{O}(\epsilon^{-2}\Delta_l^{(\beta+1)/2})$ (we have $\beta\geq 2$ in our case) as the cost of the MCMC per iteration is $\mathcal{O}(\Delta_l^{-1})$. Substituting these choices into the upper-bound in Proposition \ref{prop:main_res} gives a MSE of 
$\mathcal{O}(\epsilon^2)$ and the cost to achieve this is 
$\mathcal{O}\left(\epsilon^{-2}\right)$
for the two discretization schemes, which is the optimal cost achievable with Monte Carlo methods. We note that this cost is also achievable by the Milstein method when implementable and we will perform direct comparisons to ascertain if the constant for using SRK methods is smaller (and hence an improvement) at least numerically; this is the topic of the next section. In the case of a single level method with one of the above discretization schemes, the cost is of $\mathcal{O}\left(\epsilon^{-2(1+\tfrac{1}{\beta})}\right)$ to achieve an MSE of $\mathcal{O}(\epsilon^2)$; the improvement of using multilevel methods falls versus when using the E-M method for both single and multilevels.
In that scenario, typically the single level method has a cost of $\mathcal{O}(\epsilon^{-3})$ (for MSE of $\mathcal{O}(\epsilon^2)$) and the multilevel is often $\mathcal{O}(\epsilon^{-2}\log(\epsilon)^2)$ for the same MSE.

\section{Numerical Results}\label{sec:numerics}

In this section, we verify Proposition \ref{prop:main_res} by showing that the cost to compute the multilevel approximation in \eqref{eq:ml_approx} is proportional to MSE$^{-1}$ for both discretization schemes discussed above. In addition to that we show that in one dimension, these methods are faster than the Milstein method although they have the same rate. We provide three examples: Geometric Brownian motion (GBM) process in 1d and in 3d and a nonlinear diffusion process in 2d.
The unknown parameter to be inferred is denoted by $\theta$. We will compute the estimator $\Pi_L^{N_{l_0:L}}$ for some $l_0<L$.

\subsection{Models}
We consider three models. We make a minor modification to the observation regime: we shall take the observation times on a regular grid of spacing $\delta=1/K$, for $K\in\mathbb{N}$ and $T=1$.

\subsubsection{Model 1: 1D GBM}
\begin{align*}
dX_t &= e^\theta~ X_t~ dt + \sigma~ X_t~ dW_t, \quad X_0 = x_0, ~~t\in [0,1]\\
Y_k&|X_{k\delta} \sim \mathcal{N}(\log X_{k\delta}, \tau^2), \quad k \in \{1,\cdots,K\}, ~K\in \mathbb{N}, ~~ \delta = 1/K,
\end{align*}
where $\mathcal{N}(m,\tau^2)$ denotes the Normal distribution with mean $m$ and variance $\tau^2$. Here $(X_t, \theta, \sigma) \in  \mathbb{R}_+^3$, and $\{W_t\}_{t\in [0,1]}$ is a one-dimensional Brownian motion. We assume that $e^\theta < \sigma^2/2$. We consider a Gaussian prior $\theta \sim \mathcal{N}(-1.4,0.2)$.
We set $\sigma = 0.66$, $x_0 = 0.7$, $K=120$, $\tau^2 = 0.1$ and the true parameter value from which the data is generated to $\theta^{\star}=-1.8971$. 
\subsubsection{Model 2: 3D GBM}
\begin{align*}
dX_t &= e^\theta~ X_t~ dt + \sigma~ X_t~ dW_t, \quad X_0 = x_0, ~~t\in [0,1]\\
Y_k&|X_{k\delta} \sim \mathcal{N}(\log X_{k\delta}, \tau^2~I_3), \quad k \in \{1,\cdots,K\}, ~K\in \mathbb{N}, ~~ \delta = 1/K,
\end{align*}
where $(\theta, \sigma) \in \mathbb{R}_+^2$, $X_t \in \mathbb{R}_+^3$, $I_3$ is the 3$\times$3 identity matrix, and $\{W_t\}_{t\in [0,1]}$ is a three-dimensional Brownian motion. The log here is taken element-wise. We assume that $e^\theta < \sigma^2/2$.
We set $\sigma = 0.66$, $x_0 = (0.7,0.7,0.7)$, $K=120$, $\tau^2 = 0.1$ and the true parameter value from which the data is generated to $\theta^{\star}=-1.8971$. We consider a Gaussian prior $\theta \sim \mathcal{N}(-1.4,0.04)$.

\subsubsection{Model 3: 2D Nonlinear Diffusion Process}
\begin{align*}
dX_{1,t} &= -\theta~ X_{1,t}~ dt +  \frac{\sigma}{\sqrt{1+X_{1,t}^2}}~ dW_{1,t}, \\
dX_{2,t} &= -\theta~ X_{2,t}~ dt +  \frac{\sigma}{\sqrt{1+X_{2,t}^2}}~ dW_{2,t}, \qquad (X_{1,0}, X_{2,0}) = (x_{1,0}, x_{2,0}), ~~t\in [0,1]\\
Y_k|&X_{k\delta} \sim \mathcal{N}(X_{k\delta}, \tau^2), \quad k \in \{1,\cdots,K\}, ~K\in \mathbb{N}, ~~ \delta = 1/K.
\end{align*}
Here, $(\theta, \sigma) \in \mathbb{R}_+^2$, $\{W_{1,t}\}_{t\in [0,1]}$ and $\{W_{2,t}\}_{t\in [0,1]}$ are two one-dimensional independent Brownian motions.
We set $\sigma = 1$, $x_0 = (-1,-2)$, $K=120$, $\tau^2 = 4$ and the true parameter value from which the data is generated from to $\theta^{\star}=1$. We consider a Gaussian prior $\theta \sim \mathcal{N}(-1.2,0.1)$.

\subsection{Simulation Settings}

We set $\varphi(x_{(1:T)\delta},\theta)=\theta$. We also have $l_0=1$, $M=120$ particles (for the PF) and $N_{\text{burn}}=8000$, the burn-in period in the MCMC. The multilevel estimator $\Pi_L^{N_{l_0:L}}$ is computed for target MSEs $\epsilon^2$ in $\{2\times 10^{-3}, 2\times 10^{-3}, \cdots, 2\times 10^{-6}\}$ for the first two models and $\{2\times 10^{-2}, 2\times 10^{-3}, \cdots, 2\times 10^{-5}\}$ for the third model. For each MSE, the target discretization level is chosen as $L=2|\log_2(\epsilon)|/\beta$, with $\beta =2,3,4$ is the strong error rate of Milstein (for Model 1), SH (Models 1 \& 2) and SRK schemes (Models 1 \& 2), respectively. Since it can be easily shown that Model 3 does not satisfy assumptions D\ref{hyp:d2} and D\ref{hyp:d3}, both schemes fail to attain the expected strong error rates. This is confirmed in Figure \ref{fig:NLM_rates}, where plot the strong and weak error rates for the SH and SRK schemes when applied on Model 3. The figure shows that $\beta=2$ for both algorithms. In addition, we set $N_l=\lceil 2\epsilon^{-2}K_L\Delta_l^{(\beta+1)/2}\rceil + N_{\text{burn}}$, for $l\in \{l_0,\cdots,L\}$. The cost corresponding to each chosen MSE is given by $\mathcal{C}ost = \sum_{l=l_0}^L (N_l-N_{\text{burn}}) \Delta_l^{-1}$. The reference for each method is computed using the multilevel estimator in Algorithm \ref{alg:main_alg} at a higher level of discretization. The simulations are independently repeated $50$ times and the MSEs are computed based on the reference value of each method.

\begin{figure}[h]
\centering
\includegraphics[height=6cm,width=0.8\textwidth]{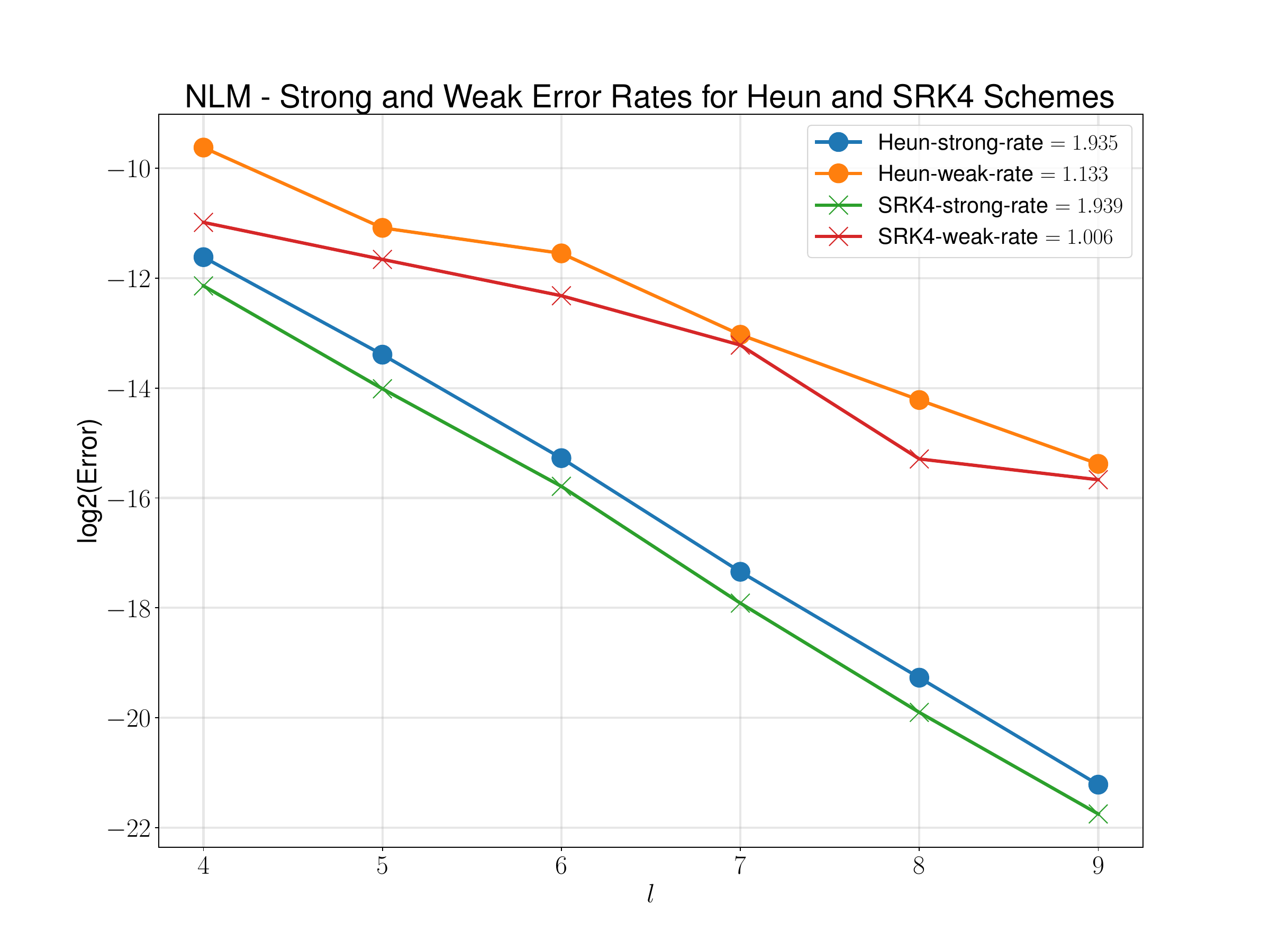}
\captionof{figure}{Strong and weak error rates of SH and classic SRK schemes when applied on the nonlinear diffusion process Model 3. The reference is the mean of 200 simulations at discretization level $L=18$.}
\label{fig:NLM_rates}
\end{figure}
\subsection{Simulation Results}
The main results of the cost vs. MSE are shown in Figures \ref{fig:cost_vs_mse1}-\ref{fig:cost_vs_mse3}. Since $\beta >1$ for all schemes considered when applied on the models above, the cost is of order MSE$^{-1}$, which is verified with our simulations. Additionally, while these methods have similar cost-error rates, the plots illustrate that in the cases of  Models 1 \& 2, the multilevel estimator presented in Algorithm \ref{alg:main_alg} achieves faster convergence when paired with the classical SRK method compared to the SH method, which, in turn, outpaces the Milstein method when the later is used for the one-dimensional model. However, for Model 3, since the strong error rate seems to be the same for both SH and SRK (as illustraded in Figure \ref{fig:NLM_rates}) and since SRK involves more calculation steps, SH, as a result, will have a lower cost; this is verified in Figure \ref{fig:cost_vs_mse3}.

\begin{figure}[h]
\centering
\includegraphics[height=6cm,width=0.8\textwidth]{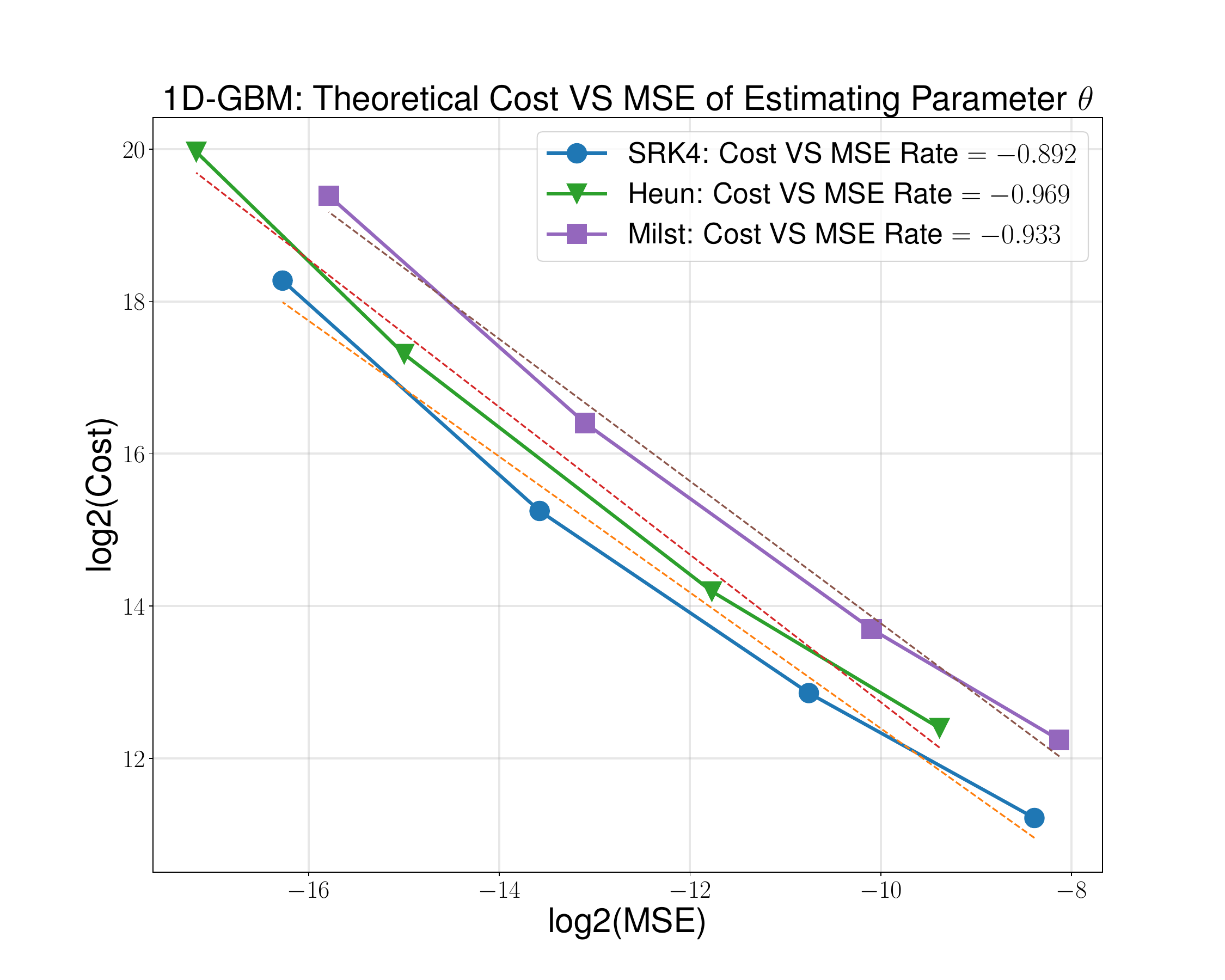}
\captionof{figure}{Cost vs. MSE for the multilevel estimator of parameter $\theta$ in Model 1.}
\label{fig:cost_vs_mse1}
\end{figure}

\begin{figure}[h]
\centering
\includegraphics[height=6cm,width=0.8\textwidth]{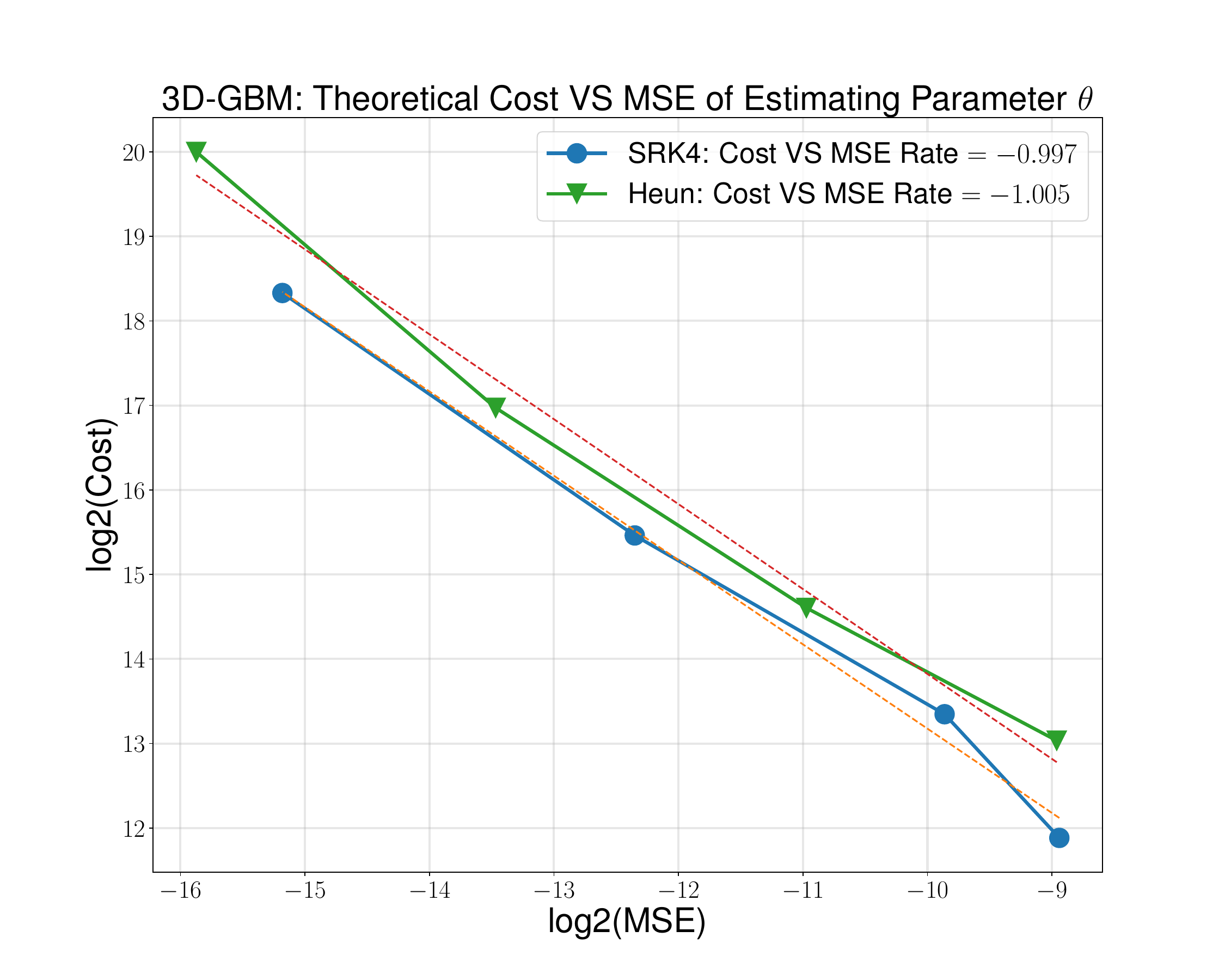}
\captionof{figure}{Cost vs. MSE for the multilevel estimator of parameter $\theta$ in Model 2.}
\label{fig:cost_vs_mse2}
\end{figure}

\begin{figure}[h]
\centering
\includegraphics[height=6cm,width=0.8\textwidth]{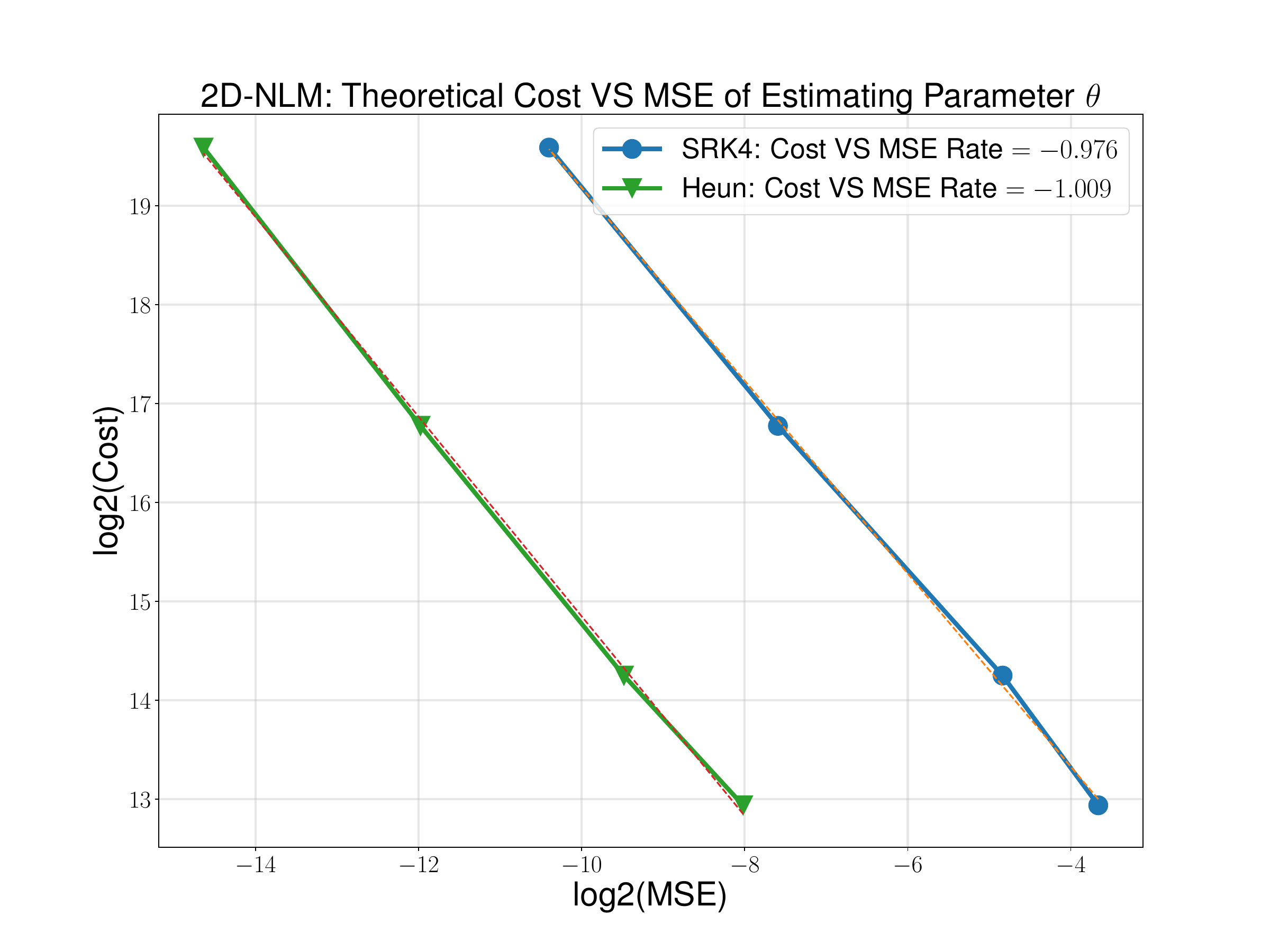}
\captionof{figure}{Cost vs. MSE for the multilevel estimator of parameter $\theta$ in Model 3.}
\label{fig:cost_vs_mse3}
\end{figure}

\subsubsection*{Acknowledgements}

AJ \& HR are supported by KAUST baseline funding.

\end{document}